\PassOptionsToPackage{table,dvipsnames}{xcolor}
\documentclass[nonacm,onecolumn,oneside,manuscript]{acmart}
\pdfoutput=1 

\usepackage{pbalance}
\usepackage{numprint}

\usepackage{latexsym}
\usepackage{booktabs}
\usepackage{enumitem}
\usepackage{graphicx}
\usepackage{color}

\usepackage{mathtools}
\usepackage{bm}
\usepackage{verbatim}
\usepackage{subcaption}
\usepackage{todonotes}
\usepackage{hyperref}
\usepackage[]{algpseudocode}
\usepackage{algorithm}

\newtheorem{theorem}{Theorem}
\newtheorem{lemma}{Lemma}

\usepackage{physics}
\usepackage{tikz}
\usepackage{mathdots}
\usepackage{cancel}
\usepackage{color}
\usepackage{siunitx}
\usepackage{array}
\usepackage{multirow}
\usepackage{gensymb}
\usepackage{tabularx}
\usepackage{extarrows}
\usepackage{booktabs}
\usetikzlibrary{fadings}
\usetikzlibrary{patterns}
\usetikzlibrary{shadows.blur}
\usetikzlibrary{shapes}

\title{Hamm-Grams: An Algorithm for Mining Regular Expressions of Bytes}

\author{Derek Everett}
\email{derekeverett@gmail.com}
\affiliation{%
  \institution{Amazon}
  \city{Baltimore}
  \country{USA} 
  }

\author{Edward Raff}
\email{edward.raff@crowdstrike.com}
\affiliation{%
  \institution{CrowdStrike}
  \city{Austin}
  \country{USA} 
  }

\author{James Holt}
\email{james@holt.net}
\affiliation{%
  \institution{CrowdStrike}
  \city{Austin}
  \country{USA} 
  }

\begin{document}

\begin{abstract}

Malware poses a critical and ever-evolving threat, and robust and effective systems for detecting and classifying malware are of essential importance. 
$n$-grams features are among the common static features used in effective machine learning systems for malware, but these features are inherently brittle.
We propose an algorithm for constructing more robust features, hamm-grams, which are a special class of regular expressions having a fixed length and single-character wildcards. 
We devise an efficient algorithm for finding common hamm-grams using a new locality-sensitive hash designed to produce collisions among pairs of small Hamming distance and a clustering within hash buckets to place wildcards. 
We then demonstrate the advantages of these features in malware classification and detection tasks. 
\end{abstract}

\maketitle

\section{Introduction}

The detection of malware is a problem of utmost importance. 
Therefore, designing reliable and robust machine-learning features for malware detection is an ongoing field of research and activity.  
The need for cheap deployment and training costs is of particular relevance, as malware is increasingly observed in extremely limited compute environments 
with resources too limited to run parsing. 
Notable incidents have included
industrial control systems~\citep{5772960} and satellites~\citep{VANCAMP2022101458} as targets of malware. 
Such systems may have as little as 256 KB of system RAM~\citep{lin2020mcunet}, requiring especially small models.

$n$-gram features, sequences of $n$ consecutive characters, form a powerful tool in the machine learning toolbox. 
Combined with linear classifiers such as $L_1$-penalized logistic regression, $n$-grams create a formidable and effective baseline and are commonly used in practice. Such models are cheap to train, do not require GPU co-processors, and are small and fast enough to deploy in almost all situations. 
For static malware detection, $n$-grams of bytes are commonly used features;
however, as $n$ increases, the likelihood of an $n$-gram re-occurring decreases, as all $n$ tokens from the alphabet $\Sigma$ must match. 
If any single entry changes, the feature effectively ``disappears'', making $n$-grams brittle when larger values of $n$ are desired.

An example that illustrates one failure case of $n$-grams for malware detection is the following x86 assembly sequence:
\begin{verbatim}
push ebp
sub edx, edx
mov ebp, esp
\end{verbatim}
which corresponds to the byte 5-gram (in hex)
\texttt{55 29 D2 89 E5}. The second assembly instruction sets the value to zero and could be equivalently replaced with the instruction
\texttt{xor edx,edx},
which would change the 5-gram to \texttt{55 31 D2 89 E5}. A Hamming distance of one separates these two $5$-grams, and many other assembly instructions also have fixed-length equivalent counterparts. 
Situations like this one are especially likely to occur in register renaming and changes of the function or data addresses that are prevalent even in non-adversarial recompilation. 
If we could learn to detect the regular expression \texttt{55 ? D2 89 E5}, we would have a feature that is robust to these changes.

Thus far, automated methods for constructing regular expressions for use in detecting malware have been hampered by the technical limitations of existing algorithms. 
For example, regular language induction methods, having been designed for problems in natural language rather than assembly,  
can not be applied directly due to the much larger sequence lengths and sample sizes. 
Prior approaches have also relied on genetic algorithms and similar search strategies that are not scalable ~\citep{10.1145/322326.322334,10.5555/2832581.2832710}.
The open challenge is in finding regular expressions in a computationally efficient manner that can readily scale to the data necessary to train modern-day malware systems.

In this work, we propose a new method of mining frequently occurring patterns based on the Hamming similarity between fixed-length token sequences. We allow wildcard tokens; a sequence with $\mathcal{K}$ wildcard tokens can match a set of $n$-grams within a Hamming distance of $\mathcal{K}$. Thus, we term these features \textit{hamm-grams}. 
As a simple example, the hamm-gram \texttt{AB?D} matches both \texttt{ABCD} and \texttt{ABFD}, but does not match \texttt{ABDC}.
By restraining ourselves to the space of hamming similarity, a set of hamm-grams forms a regular language. This guarantees that a search for matches can be done efficiently in linear time with respect to the input sequence length.

The rest of this article is organized as follows. First, we briefly discuss the related work in regular language induction, regular expressions, and $n$-grams in \autoref{sec:related_work}. Next, we design an efficient algorithm for hamm-gram extraction in \autoref{sec:method}. 
Finally, to demonstrate how our approach provides benefits for both malware family classification and malicious vs benign classification, we perform studies on the publicly available Drebin android malware family dataset and Ember 2018 Windows malware dataset in \autoref{sec:results}. 

\section{Related Work} \label{sec:related_work}

Regular expressions have been important building blocks of malware signature detection for decades, across file types and platforms 
\citep{Hahn2014,smutz2012malicious,VanBaar2014,Levchenko:2011:CTE:2006077.2006780}, 
and are still used by most modern anti-virus products\citep{Botacin2021a}. 
However, these regular expressions have thus far been constructed by human effort. Reverse-engineering a binary file to identify signatures can take days or weeks for an expert analyst \citep{Votipka2019,10.1145/3460120.3484759}, 
and so the acceleration and automation of this process is highly valuable. 

The field of regular language induction has benefited from significant study
\citep{10.5555/645515.658238,Coste1997,KUDO1988401,10.1145/322326.322334,10.5555/2832581.2832710}, 
but the methods that have been developed are not applicable malware detection. 
Those methods often require a fixed subset of languages and a corpus of \textit{known} positive and negative samples, and incur hard-to-quantify computational complexity and slow inference. 
In our case, the set of possible exemplars is all observed $n$-grams, on the order of $10^{12}$ unique substrings in large corpora, which is orders-of-magnitude beyond the scales ($10^{3}$ or less) for which regular language induction methods are feasible.

The method we propose uses algorithmic tools from regular languages to build deterministic finite automata (DFA) to efficiently search for matches to all stored hamm-grams.
Moreover, we restrict our regular expressions to the space of hamm-grams to create a computationally feasible approach. 
This extends the literal characters by a symbol that can match any single literal character (which we denote throughout by `?') and requires each expression to have a predetermined fixed length. 
Consequently, this subset of DFA can be implemented by a prefix tree (Trie) extended to allow child nodes to be wildcards.

$n$-grams have been popular features since the first works applying machine learning to the malware problem \citep{Kephart:1995:BID:1625855.1625983}, and have remained popular in both machine learning and security venues~\citep{Jang2011,Kolter2004,Drew2016,shareghi_show_2019} with significant effort to reduce the cost of $n$-gram extraction~\cite{Ceska2007,raff_zipf-gramming_2025,curtin_intermediate_2025,hashgram_2018,raff_hash_gram_parallel,gupta_living_2024}.
Prior attempts to build static features with more flexibility than $n$-grams have not become popular due to reduced accuracy. 
The study \citep{walenstein2007exploiting} proposed ``$n$-perms'', where a byte-sequence was sorted to a canonical order to reduce specificity. As an example, \texttt{CBA}, \texttt{BCA}, and \texttt{ABC} would all be considered equivalent $n$-perms because they are all identical to \texttt{ABC} after being sorted. 
Others have attempted to reduce $n$-gram specificity by first disassembling an executable to apply domain knowledge: e.g., \citep{Shabtai200916} proposed reducing instructions like 
\texttt{mov eax,42}
to just the instruction mnemonic 
\texttt{mov};
\citep{Masud2008} proposed reducing the operands to a reduced set of tokens denoting the type, for example, 
\texttt{mov eax,42}
would become \texttt{mov.register.constant} to denote that the first and second arguments were a register type and a constant type. However, subsequent works applying these techniques to larger industry-sized corpora have found that simple byte $n$-grams were more accurate \citep{Zak2017}. In contrast, our work develops a method that finds hamm-grams without any extra domain knowledge. Other challenges in malware detection, like measuring effect size and label confidence, are beyond the scope of this chapter~\cite{TirthCAMLIS,Joyce2022,joyce_claravy_2025,agtr}.

\section{Complex LSH Method and properties} \label{sec:method}

Creating hamm-grams from a candidate collection of $n$-grams first requires a way to bin $n$-grams of small mutual Hamming distance together.  
Otherwise, an algorithm would require a number of comparisons that scales at least quadratically in the total collection size. 
This binning will be achieved by developing a rolling hash function that is locality-sensitive by Hamming distance. The approach will be given in \autoref{sec:lsh_base}, and we show analytically that it corresponds to the Hamming distance in \autoref{sec:lsh_approx}. In \autoref{sec:lsh_random_experiment} we confirm experimentally that this rolling LSH accomplishes our goals.  
Once $n$-grams have been hashed such that similar $n$-grams collide, 
we employ 
an agglomerative clustering method
for mining hamm-grams from the buckets, and this is described in \autoref{sec:lsh_hamming_extraction}. 

\subsection{Description of Complex LSH} \label{sec:lsh_base}

Consider sequences of length $m$,
\begin{equation}
    [s_0, s_1, ..., s_{m-1}],
\end{equation}
where $s_i$ are drawn from a finite alphabet $\mathcal{A}$ of $T$ characters.
Let our \emph{tokenizer} $\mathcal{T}$, defined such that
\begin{equation}
    \mathcal{T}: \mathcal{A} \rightarrow \{ e^{i\psi_0}, e^{i\psi_1}, ..., e^{i\psi_T}\},
\end{equation}
be a bijective function 
that 
maps every element of the alphabet to a unique complex number with unit modulus. 
It follows that each sequence can be uniquely represented by a sequence of tokens:
\begin{equation}
    [\mathcal{T}(s_0), \mathcal{T}(s_1), ..., \mathcal{T}(s_{m-1})] \equiv [t_0, t_1, ..., t_{m-1}].
\end{equation}

Consider breaking each sequence into $n$-grams:%
\begin{equation}
    [t_0, t_1, ..., t_{n-1}], ..., [t_{m-n-1}, t_{m-n}, ..., t_{m-1}].
\end{equation}
We require a locality-sensitive hashing algorithm that is sensitive to the Hamming distance~\citep{hamming} between $n$-grams and can be computed in \textit{sub-linear} time in $n$. The algorithm we developed that fulfills these needs is described below. 

First we fill a vector $\boldsymbol{w}$ of size $n$ with powers of unit-modulus complex number $a$,
\begin{equation}
    \boldsymbol{w} \equiv [a^1, a^2, ..., a^n]
\end{equation}
where
\begin{equation}
    a \equiv e^{i \phi},
\end{equation}
and $\phi$ is sampled from $\rm{Uniform}[0, 2\pi]$. 
This is convenient for taking powers:
\begin{equation}
    a^r = e^{i r \phi},
\end{equation}
from which it follows that
\begin{equation}
    a^{-1} = e^{-i\phi}.%
\end{equation}
In the following, we use the notation $\boldsymbol{v}_{[i:j]}$ to denote the slice of a vector $\boldsymbol{v}$ beginning at $v_i$ and ending at $v_{j-1}$.
Consider the dot product between our weight vector $\boldsymbol{w}$ and our tokens, 
\begin{equation}
\label{eqn:curr_hash}
    h_p \equiv \boldsymbol{w} \cdot \boldsymbol{t}_{[j:j+n]} = a^1 t_j + a^2 t_{j+1} + \cdots + a^n t_{j+n-1}.
\end{equation}
Further, note that the dot product between the weight vector and the adjacent window is given by
\begin{equation}
    \boldsymbol{w} \cdot \boldsymbol{t}_{[j+1:j+n+1]} = a^1 t_{j+1} + a^2 t_{j+2} + \cdots + a^n t_{j+n}.
\end{equation}
Using that
\begin{equation}
    a^{-1}h_p - t_j = a^1 t_{j+1} + a^2 t_{j+2} + \cdots + a^{n-1}t_{j+n-1},
\end{equation}
it follows that
\begin{equation}
    h_u \equiv \boldsymbol{w} \cdot \boldsymbol{t}_{[j+1:j+n+1]} = a^{-1}h_p - t_j + a^{n}t_{j+n}.
\end{equation}
Thus, we've shown that the updated complex hash $h_u$ (the dot-product between the adjacent window and the weights) can be computed with complex multiplication and addition involving only the previous complex hash $h_p$, the excluded token $t_j$, and the new token $t_{j+n}$. This computation scales as $\mathcal{O}(1)$
with respect to the vector length $n$, and therefore will be efficient enough to be applied to every $n$-gram in each file.

\begin{figure}[!h]
\centering

\tikzset{every picture/.style={line width=0.75pt}} %

\begin{tikzpicture}[x=0.75pt,y=0.75pt,yscale=-1,xscale=1]
\draw    (150,23) -- (150,257) ;
\draw [shift={(150,260)}, rotate = 270] [fill={rgb, 255:red, 0; green, 0; blue, 0 }  ][line width=0.08]  [draw opacity=0] (8.93,-4.29) -- (0,0) -- (8.93,4.29) -- cycle    ;
\draw [shift={(150,20)}, rotate = 90] [fill={rgb, 255:red, 0; green, 0; blue, 0 }  ][line width=0.08]  [draw opacity=0] (8.93,-4.29) -- (0,0) -- (8.93,4.29) -- cycle    ;
\draw    (23,170) -- (267,170) ;
\draw [shift={(270,170)}, rotate = 180] [fill={rgb, 255:red, 0; green, 0; blue, 0 }  ][line width=0.08]  [draw opacity=0] (8.93,-4.29) -- (0,0) -- (8.93,4.29) -- cycle    ;
\draw [shift={(20,170)}, rotate = 0] [fill={rgb, 255:red, 0; green, 0; blue, 0 }  ][line width=0.08]  [draw opacity=0] (8.93,-4.29) -- (0,0) -- (8.93,4.29) -- cycle    ;
\draw [color={rgb, 255:red, 74; green, 144; blue, 226 }  ,draw opacity=1 ][fill={rgb, 255:red, 80; green, 227; blue, 194 }  ,fill opacity=1 ]   (150,170) -- (207.7,121.92) ;
\draw [shift={(210,120)}, rotate = 140.19] [fill={rgb, 255:red, 74; green, 144; blue, 226 }  ,fill opacity=1 ][line width=0.08]  [draw opacity=0] (10.72,-5.15) -- (0,0) -- (10.72,5.15) -- (7.12,0) -- cycle    ;
\draw [color={rgb, 255:red, 80; green, 227; blue, 194 }  ,draw opacity=1 ][fill={rgb, 255:red, 80; green, 227; blue, 194 }  ,fill opacity=1 ]   (210,120) -- (248.34,62.5) ;
\draw [shift={(250,60)}, rotate = 123.69] [fill={rgb, 255:red, 80; green, 227; blue, 194 }  ,fill opacity=1 ][line width=0.08]  [draw opacity=0] (10.72,-5.15) -- (0,0) -- (10.72,5.15) -- (7.12,0) -- cycle    ;
\draw [color={rgb, 255:red, 245; green, 166; blue, 35 }  ,draw opacity=1 ][fill={rgb, 255:red, 80; green, 227; blue, 194 }  ,fill opacity=1 ]   (210,120) -- (248.34,177.5) ;
\draw [shift={(250,180)}, rotate = 236.31] [fill={rgb, 255:red, 245; green, 166; blue, 35 }  ,fill opacity=1 ][line width=0.08]  [draw opacity=0] (10.72,-5.15) -- (0,0) -- (10.72,5.15) -- (7.12,0) -- cycle    ;
\draw [color={rgb, 255:red, 65; green, 117; blue, 5 }  ,draw opacity=1 ][fill={rgb, 255:red, 65; green, 117; blue, 5 }  ,fill opacity=1 ]   (250,60) -- (182.97,69.58) ;
\draw [shift={(180,70)}, rotate = 351.87] [fill={rgb, 255:red, 65; green, 117; blue, 5 }  ,fill opacity=1 ][line width=0.08]  [draw opacity=0] (10.72,-5.15) -- (0,0) -- (10.72,5.15) -- (7.12,0) -- cycle    ;
\draw [color={rgb, 255:red, 65; green, 117; blue, 5 }  ,draw opacity=1 ][fill={rgb, 255:red, 80; green, 227; blue, 194 }  ,fill opacity=1 ]   (250,180) -- (182.97,189.58) ;
\draw [shift={(180,190)}, rotate = 351.87] [fill={rgb, 255:red, 65; green, 117; blue, 5 }  ,fill opacity=1 ][line width=0.08]  [draw opacity=0] (10.72,-5.15) -- (0,0) -- (10.72,5.15) -- (7.12,0) -- cycle    ;
\draw [color={rgb, 255:red, 80; green, 227; blue, 194 }  ,draw opacity=1 ][fill={rgb, 255:red, 80; green, 227; blue, 194 }  ,fill opacity=1 ]   (130.82,237.12) -- (150,170) ;
\draw [shift={(130,240)}, rotate = 285.95] [fill={rgb, 255:red, 80; green, 227; blue, 194 }  ,fill opacity=1 ][line width=0.08]  [draw opacity=0] (10.72,-5.15) -- (0,0) -- (10.72,5.15) -- (7.12,0) -- cycle    ;
\draw [color={rgb, 255:red, 139; green, 87; blue, 42 }  ,draw opacity=1 ][fill={rgb, 255:red, 80; green, 227; blue, 194 }  ,fill opacity=1 ]   (130,240) -- (72.3,191.92) ;
\draw [shift={(70,190)}, rotate = 39.81] [fill={rgb, 255:red, 139; green, 87; blue, 42 }  ,fill opacity=1 ][line width=0.08]  [draw opacity=0] (10.72,-5.15) -- (0,0) -- (10.72,5.15) -- (7.12,0) -- cycle    ;
\draw [color={rgb, 255:red, 144; green, 19; blue, 254 }  ,draw opacity=1 ][fill={rgb, 255:red, 80; green, 227; blue, 194 }  ,fill opacity=1 ]   (70,190) -- (60.49,132.96) ;
\draw [shift={(60,130)}, rotate = 80.54] [fill={rgb, 255:red, 144; green, 19; blue, 254 }  ,fill opacity=1 ][line width=0.08]  [draw opacity=0] (10.72,-5.15) -- (0,0) -- (10.72,5.15) -- (7.12,0) -- cycle    ;
\draw  [draw opacity=0][fill={rgb, 255:red, 248; green, 231; blue, 28 }  ,fill opacity=0.1 ][dash pattern={on 4.5pt off 4.5pt}] (150,0) -- (300,0) -- (300,260) -- (150,260) -- cycle ;
\draw  [draw opacity=0][fill={rgb, 255:red, 126; green, 211; blue, 33 }  ,fill opacity=0.1 ][dash pattern={on 4.5pt off 4.5pt}] (0,0) -- (150,0) -- (150,260) -- (0,260) -- cycle ;

\draw (151,101.4) node [anchor=north west][inner sep=0.75pt]  [color={rgb, 255:red, 74; green, 144; blue, 226 }  ,opacity=1 ]  {$a^{1}\mathcal{T}( f)$};
\draw (242,71.4) node [anchor=north west][inner sep=0.75pt]  [color={rgb, 255:red, 80; green, 227; blue, 194 }  ,opacity=1 ]  {$a^{2}\mathcal{T}( a)$};
\draw (171,32.4) node [anchor=north west][inner sep=0.75pt]  [color={rgb, 255:red, 65; green, 117; blue, 5 }  ,opacity=1 ]  {$a^{3}\mathcal{T}( x)$};
\draw (201,192.4) node [anchor=north west][inner sep=0.75pt]  [color={rgb, 255:red, 65; green, 117; blue, 5 }  ,opacity=1 ]  {$a^{3}\mathcal{T}( x)$};
\draw (241,131.4) node [anchor=north west][inner sep=0.75pt]  [color={rgb, 255:red, 245; green, 166; blue, 35 }  ,opacity=1 ]  {$a^{2}\mathcal{T}( o)$};
\draw (92,181.4) node [anchor=north west][inner sep=0.75pt]  [color={rgb, 255:red, 80; green, 227; blue, 194 }  ,opacity=1 ]  {$a^{1}\mathcal{T}( a)$};
\draw (43,221.4) node [anchor=north west][inner sep=0.75pt]  [color={rgb, 255:red, 139; green, 87; blue, 42 }  ,opacity=1 ]  {$a^{2}\mathcal{T}( c)$};
\draw (71,131.4) node [anchor=north west][inner sep=0.75pt]  [color={rgb, 255:red, 144; green, 19; blue, 254 }  ,opacity=1 ]  {$a^{3}\mathcal{T}( e)$};
\draw (85.5,9.5) node   [align=left] {\begin{minipage}[lt]{88.57pt}\setlength\topsep{0pt}
\begin{center}
"0"-bit result
\end{center}

\end{minipage}};
\draw (214.5,10.5) node   [align=left] {\begin{minipage}[lt]{88.57pt}\setlength\topsep{0pt}
\begin{center}
"1"-bit result
\end{center}

\end{minipage}};

\end{tikzpicture}

\caption{An illustration of how the complex hashing function tends to put similar sequences into the same hash bucket, illustrated on $3$-grams of English letters: \texttt{fax} and  \texttt{fox} which get a ``1''-bit for their hash, and \texttt{ace} which receives a ``0'' bit for it's hash. Each $a^i \mathcal{T}(\cdot)$ applies a new rotation and fixed step size. The accumulation of such operations results in similarly shaped trajectories that differ in expectation as a function of how many steps in the sequence differ, i.e., the hamming distance. It is thus more likely for two sequences with low hamming distance to end up in the same half of the plane. E.g., fax and fox are both on the right hand size and receive a ``0'' bit in this example. }
\label{fig:clsh}
\end{figure}

In Fig.~\ref{fig:clsh} we've provided an illustration that explains intuitively how the hash we have devised is locality-sensitive under the Hamming distance. As an example, we consider computing the hash on the $3$-grams of English letters, \texttt{fax}, \texttt{fox}, and \texttt{ace}.
Each of the three letters is assigned (randomly) a unit vector in the complex plane, which are denoted $\mathcal{T}(\text{f})$, $\mathcal{T}(\text{o})$, and so on for all letters considered. Now, according to equation ~\ref{eqn:curr_hash}, the complex value assigned to the $3$-gram \texttt{fox} is $a^1 \mathcal{T}(\text{f}) + a^2 \mathcal{T}(\text{o}) + a^3 \mathcal{T}(\text{x})$, and similarly for the others. Because $a$ and all of the complex values assigned to the letters are unit-length vectors, their products are also unit-length vectors. Therefore, in each sum we have unit-length vectors added tip-to-tail. 
Now, it should be clear why \texttt{fox} and \texttt{fax} are more likely to be mapped to complex values that lie closer together, and therefore more likely to fall within the same half-plane. This example also illustrates that although \texttt{fax} and \texttt{ace} both have the character \texttt{a}, because they occur at different positions in the sequence, 
they get multiplied by differing powers of $a$ and are mapped to different vectors in each sum.

Finally, once the complex hashes $h_i$ have been computed for all possible windows, a bit hash-code $b_i$ can be assigned to each according to the truth value of the following statement:
\begin{equation}
    b_i \equiv (\mathcal{R}\{h_i\} > 0).
\end{equation}
This entire process can be repeated $M$ times with independently and randomly sampled angles $\phi$ to define an $M$-bit hash code. 
In the following sections, we demonstrate both by analytic considerations and experiments that the hash code we have developed is locality-sensitive under the Hamming distance. 

\subsection{Semi-analytic approximation} \label{sec:lsh_approx}

In this section, we demonstrate that the hashing algorithm described above is locality-sensitive with respect to the Hamming distance (rather than the Jaccard distance~\citep{jaccard1912distribution}, for example). 
To do so, we derive certain approximate analytic results under the assumption of uniformly random sequences.
To that end, we consider again the expression for the dot product between a random weight vector $\boldsymbol{w}$ and a particular tokenized n-gram, 
\begin{equation}
\label{eqn:dot_prod}
    \boldsymbol{w} \cdot \boldsymbol{t} = e^{i[\phi + \psi_0]} + e^{i[2\phi + \psi_1]} +  \cdots + e^{i[n\phi + \psi_{n-1}]}.
\end{equation}
Since both the weight-vector phasors $e^{i j \phi}$ and the token phasors $e^{i \psi_{j}}$ are drawn from circularly symmetric distributions, it follows that their product $e^{i(j\phi)} e^{i\psi_{j-1}} = e^{i[(j\phi)+\psi_{j-1}]}$ is also a uniform distribution over the ring of unit modulus~\citep{tse2005fundamentals}.  
Therefore Eqn. \ref{eqn:dot_prod} is equivalent to a sum of the form 
\begin{equation}
    Z_n \equiv \sum_{j=1}^{n} e^{i\theta_j},
\end{equation}
where $\theta_j \sim \rm{Uniform}[0, 2\pi]$. 
This sum has familiar and useful properties~\citep{weisstein}, which we briefly review.
We note that the following results are derived in the approximation $n, k \gg 1$, and we use the notation $\langle \cdots \rangle$ for expectation values.

\begin{lemma}[Properties of \(Z_n\)]
The random walk \(Z_n = \sum_{j=1}^n e^{i\theta_j}\), where \(\theta_j \sim \mathrm{Uniform}[0, 2\pi]\), satisfies the following properties:
\[
\langle Z_n \rangle = 0, \quad \langle |Z_n|^2 \rangle = n, \quad \langle \mathcal{R}\{Z_n\}^2 \rangle = \langle \mathcal{I}\{Z_n\}^2 \rangle = n/2.
\]
Furthermore, for large \(n\), the distribution of \(x_n = \mathcal{R}\{Z_n\}\) is approximately
\[
x_n \sim \mathcal{N}(0, n/2).
\]
\end{lemma}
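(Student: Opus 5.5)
The plan is to treat $Z_n$ as a sum of i.i.d.\ complex random variables $X_j = e^{i\theta_j}$, with $\theta_j$ independent and uniform on $[0,2\pi]$. Each claimed property then reduces to a moment computation for a single summand combined with independence. The only asymptotic input is the central limit theorem at the end.

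First, linearity gives $\langle Z_n\rangle = \sum_j \langle e^{i\theta_j}\rangle$. Each term satisfies
\[
\langle e^{i\theta}\rangle = \frac{1}{2\pi}\int_0^{2\pi} e^{i\theta}\,d\theta = 0,
\]
so $\langle Z_n \rangle = 0$.

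Next, expand $|Z_n|^2 = Z_n\overline{Z_n} = \sum_{j,k} e^{i(\theta_j-\theta_k)}$.
\begin{itemize}
\item The $n$ diagonal terms each equal $1$.
\item For $j\neq k$, independence gives $\langle e^{i\theta_j}\rangle\langle e^{-i\theta_k}\rangle = 0$.
\end{itemize}
Hence $\langle |Z_n|^2\rangle = n$.

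For the real and imaginary parts, write $\mathcal{R}\{Z_n\} = \sum_j \cos\theta_j$ and $\mathcal{I}\{Z_n\} = \sum_j \sin\theta_j$. The summands are independent with mean zero. Cross terms therefore vanish, and
\[
\langle \mathcal{R}\{Z_n\}^2\rangle = \sum_j \langle\cos^2\theta_j\rangle = n/2,
\]
using $\langle \cos^2\theta\rangle = \tfrac12\langle 1+\cos 2\theta\rangle = \tfrac12$. The same argument with $\sin^2$ gives $\langle \mathcal{I}\{Z_n\}^2\rangle = n/2$. As a check, the two add up to $\langle|Z_n|^2\rangle = n$. Equality of the two variances also follows directly from rotational invariance: $\theta \mapsto \theta + \pi/2$ preserves the uniform law and swaps $\cos$ with $\sin$ up to sign.

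Finally, for the distributional statement, $x_n = \sum_j \cos\theta_j$ is a sum of i.i.d.\ bounded variables with mean $0$ and variance $1/2$. The classical Lindeberg--L\'evy central limit theorem gives $x_n/\sqrt{n/2} \to \mathcal{N}(0,1)$ in distribution, which is the meaning of the approximation $x_n \sim \mathcal{N}(0,n/2)$ for large $n$. If a quantitative version is wanted, Berry--Esseen bounds the error by $O(n^{-1/2})$, since $\langle|\cos\theta|^3\rangle$ is finite.

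The only real subtlety lies upstream of this lemma rather than in it: whether the phases $j\phi + \psi_{j-1}$ in Eqn.~\ref{eqn:dot_prod} are genuinely independent, since they share the single angle $\phi$. The lemma itself, however, assumes independent uniform $\theta_j$, and under that assumption every step above is elementary. The CLT step is the only place an approximation enters, so that is where care is needed.
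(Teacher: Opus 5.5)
Your proof is correct and follows the same outline as the paper's: mean zero from circular symmetry, $\langle |Z_n|^2\rangle = n$ from the diagonal/off-diagonal expansion, the $n/2$ split between real and imaginary parts by symmetry (which you also check directly via $\langle \cos^2\theta\rangle = 1/2$), and a Gaussian limit for $x_n$. The only difference is in the last step, where you use the Lindeberg--L\'evy CLT (with a Berry--Esseen rate) instead of the paper's appeal to Rayleigh's diffusion limit; this makes precise in what sense the approximation $x_n \sim \mathcal{N}(0, n/2)$ holds.
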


\begin{proof}

The distribution is circularly symmetric, and therefore 
\begin{equation}
    \langle Z_n \rangle = \langle \mathcal{R}\{Z_n\} \rangle = \langle \mathcal{I}\{Z_n\} \rangle = 0.
\end{equation}
Moreover, 
\begin{align}
    \langle |Z_n|^2 \rangle = \left\langle \sum^{n}_{i=1} \sum^{n}_{k=1} e^{i(\theta_j - \theta_k)} \right\rangle \\ 
    = n + \left\langle \sum^{n}_{j \neq k} e^{i(\theta_j-\theta_k)} \right\rangle = n
\end{align}
Since by definition,
\begin{equation}
    |Z_n|^2 = \mathcal{R}\{Z_n\}^2 + \mathcal{I}\{Z_n\}^2, 
\end{equation}
it follows by symmetry that 
\begin{equation}
    \langle \mathcal{R}\{Z_n\}^2 \rangle = \langle \mathcal{I}\{Z_n\}^2 \rangle = n/2.
\end{equation}

The sum $Z_n$ is equivalent to a random walk in the complex plane with unit-length steps. The resulting approximate probability distribution in the limit of large-n is usually attributed to Lord Rayleigh~\citep{Rayleigh:1905zps}. In the continuous limit, the problem is equivalent to a diffusion equation, which yields a probability density given by the multivariate Gaussian.   
Therefore, denoting $x_n \equiv \mathcal{R}\{Z_n\}$, we can approximate the probability density for $x_n$ by
\begin{equation}
    x_n \sim \mathcal{N}(0, n/2). 
\end{equation}

\end{proof}

\begin{lemma}[Probabilistic Relation for Hamming Distance]
Let \(x_c\) and \(\Delta x\) be the real parts of the sums of random walks for vectors differing by Hamming distance \(k\). The probability that \(|\Delta x| > |x_c|\) is given by
\[
p = \frac{1}{\pi} \tan^{-1}\left(\sqrt{\frac{\bar{k}}{1-\bar{k}}}\right), \quad \text{where } \bar{k} = k/n.
\]
\end{lemma}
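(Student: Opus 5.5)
The plan is to model the two hashed $n$-grams as sharing $n-k$ unit-length steps and differing in $k$ steps, and then compute $\Pr(|\Delta x| > |x_c|)$ directly. Here $x_c$ is the real part of the shared partial sum and $\Delta x$ is the real part of the contribution from the $k$ positions that differ. I will work in the same approximation as the Lemma on the properties of $Z_n$, with $n,k\gg 1$ and uniformly random phases $\theta_j$.

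\textbf{Step 1 (Gaussian model).} Split the sum $\boldsymbol{w}\cdot\boldsymbol{t}$ into the $n-k$ common positions and the $k$ positions where the tokens differ. The phases at distinct positions are independent and uniform, so the two sub-sums are independent random walks. The previous lemma then gives
\[
x_c \sim \mathcal{N}\!\left(0,\tfrac{n-k}{2}\right),\qquad \Delta x \sim \mathcal{N}\!\left(0,\tfrac{k}{2}\right),
\]
independently.

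\textbf{Step 2 (reduction to an angle).} Standardize by writing $x_c=\sigma_c u$ and $\Delta x=\sigma_\Delta v$, where $(u,v)$ is a standard bivariate normal, $\sigma_c^2=(n-k)/2$ and $\sigma_\Delta^2=k/2$. By rotational invariance the polar angle $\Theta$ of $(u,v)$ is uniform on $[0,2\pi)$. The event $|\Delta x|>|x_c|$ is the double wedge $|\tan\Theta|>\sigma_c/\sigma_\Delta=\sqrt{(1-\bar k)/\bar k}$. Equivalently, the complementary angle $\pi/2-|\Theta|$, measured in the relevant quadrant, is less than $\tan^{-1}\sqrt{\bar k/(1-\bar k)}$. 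The probability is therefore the angular measure of this wedge divided by the total angular measure. This is also the Cauchy-ratio computation for $\Delta x/x_c$. The quantity $\tan^{-1}\sqrt{\bar k/(1-\bar k)}$ appears immediately, which gives the functional form in the statement.

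\textbf{Step 3 (the constant; main obstacle).} The hard part is getting the prefactor $1/\pi$ exactly. If I count both opposite wedges of the double wedge over the full circle of measure $2\pi$, the total angle is $4\tan^{-1}\sqrt{\bar k/(1-\bar k)}$, which yields $\tfrac{2}{\pi}\tan^{-1}(\cdot)$. Matching the statement as worded therefore depends on how the wedge angle is normalized under the paper's definition of $\Delta x$. I will first recheck whether the intended $\Delta x$ is defined so that the wedge half-angle enters only once against $\pi$. If so, the stated $p$ follows from the Step 2 angle count.

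If instead the natural independent-Gaussian model is forced, the computation in Step 2 gives $\tfrac{2}{\pi}\tan^{-1}\sqrt{\bar k/(1-\bar k)}$ for the event exactly as worded. In that case I would report the discrepancy rather than alter the event. I will also sanity-check whichever constant results at $\bar k\to0$, where $p\to0$, and at $\bar k=1/2$, where $\sigma_c=\sigma_\Delta$.
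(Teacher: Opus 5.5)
Your Step 2 computation is correct, and so is your suspicion in Step 3. For the event exactly as worded, $\mathcal{P}(|\Delta x|>|x_c|)=\frac{2}{\pi}\tan^{-1}\sqrt{\bar k/(1-\bar k)}$, which is the intermediate value the paper's own proof reaches. The lemma's wording is loose. In the proof, $p$ is \emph{defined} as $\frac{1}{2}\mathcal{P}(|\Delta x|>|x_c|)$: the probability that $\Delta x$ exceeds $x_c$ in magnitude \emph{and} has the opposite sign, i.e.\ that $\mathrm{sgn}(x_c+\Delta x)\neq\mathrm{sgn}(x_c)$. That is the event which actually flips the hash bit, and it is the one fed into $1-2p(1-p)$ in the collision theorem. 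So the first branch of your Step 3 would come up empty. $\Delta x$ is just the real part of the length-$k$ walk, exactly as you modelled it, and the missing factor $\frac{1}{2}$ comes from the sign condition on the event, not from a different normalization of $\Delta x$. In your angular picture this is immediate. Requiring $uv<0$ keeps only the parts of the double wedge in the second and fourth quadrants, of total angle $2\tan^{-1}\sqrt{\bar k/(1-\bar k)}$ out of $2\pi$, which gives the stated $\frac{1}{\pi}\tan^{-1}\sqrt{\bar k/(1-\bar k)}$. Equivalently, halve the unsigned probability using independence and the symmetry $\Delta x\mapsto-\Delta x$. Your check at $\bar k=\frac{1}{2}$ then gives $p=\frac{1}{4}$.

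Your route also differs from the paper's. The paper writes $\mathcal{P}(|\Delta x|<|x_c|)$ as the integral of the half-normal CDF $\mathrm{erf}(a/(\sqrt{2}\,\sigma))$ against the half-normal density of $|x_c|$, and simply asserts the closed form $1-\frac{2}{\pi}\tan^{-1}(\sqrt{k}/\sqrt{n-k})$. Your rotational-invariance argument uses the uniform polar angle of the standardized pair, or equivalently the Cauchy law of $\Delta x/x_c$ with scale $\sqrt{k/(n-k)}$. It reaches the same arctangent without evaluating any integral, so it supplies the step the paper leaves unjustified, and it handles the signed and unsigned events equally easily. Both arguments rest on the same inputs: the Gaussian approximation of the preceding lemma and the independence of the common and differing sub-walks.
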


\begin{proof}

Now consider the complex hashes (before we take the sign of the real part) of two vectors $\boldsymbol{t}$ and $\boldsymbol{t}'$ which differ by exactly $k$ tokens; that is, the two vectors are separated by Hamming distance equal to $k$. 
Let $t_c$ be the subvector of length $n-k$ which is common to both $\boldsymbol{t}$ and $\boldsymbol{t}'$, and $\boldsymbol{w}_c$ the corresponding subvector (according to index) of $\boldsymbol{w}$. Similarly, let $\boldsymbol{t}_d$ and $\boldsymbol{t}_d'$ be the subvectors of $k$ elements which are unique to $\boldsymbol{t}$ and $\boldsymbol{t}'$ respectively, and $\boldsymbol{w}_d$ the corresponding subvector (according to index) of $\boldsymbol{w}$. 

To more easily estimate the probability that their bit hash codes differ, we note that it is equivalent to the following probability:
Sample three random walks starting from the origin, one of length $(n-k)$, and two of length $k$. Suppose, without loss of generality, that the walk of length $(n-k)$ yields a positive real value $x_c$. What is the probability that exactly one of the two walks of length $k$ yields a negative real value $\Delta x$ with a magnitude larger than $x_c$? If $p$ denotes the probability that the first walk yields a negative real value with a magnitude larger than $x_c$, from the independence of the two random walks, it follows that $1-p$ gives the probability that the second walk does not. Therefore, altogether the probability we seek is given by $2p(1-p)$, with a factor of two counting the two possible situations (either the first or second walk may be the one that yields the negative real value). 

To solve for $p$, we note that 
\begin{equation}
    p = \frac{1}{2}\mathcal{P}(|\Delta x| > |x_c|),
\end{equation}
where $x_c \sim \mathcal{N}(0, (n-k)/2)$ and $\Delta x \sim \mathcal{N}(0, k/2)$. The factor of $1/2$ accounts for the fact that a larger magnitude alone is not sufficient, but it must also have opposite sign (two possibilities which are equal by symmetry). 
The probability distribution of the absolute value of a zero-mean normal random variable is given by the half-normal distribution with zero location parameter, 
therefore $|x_c| \sim {\rm HN}(0, (n-k)/2)$ and $|\Delta x| \sim {\rm HN}(0, k/2)$. 
The cumulative distribution function for the half-normal distribution with zero location parameter and scale parameter $\sigma$ is given by 
\begin{equation}
    \mathcal{P}(|\Delta x| < a) = \rm{erf}\left(\frac{a}{\sqrt{2}\sigma}\right). 
\end{equation}
Let 
\begin{equation}
    I \equiv \mathcal{P}(|\Delta x| < |x_c|) = \int_0^{\infty} da \mathcal{P}(|\Delta x| < a)\rho(|x_c|).
\end{equation}
We have the result that 
\begin{equation}
    I = 1 - \frac{2}{\pi} \rm{tan}^{-1} \left( \frac{\sqrt{k}}{\sqrt{n-k}} \right).
\end{equation}
It follows that 
\begin{equation}
    \mathcal{P}(|\Delta x| > |x_c|) = \frac{2}{\pi} \rm{tan}^{-1} \left( \frac{\sqrt{k}}{\sqrt{n-k}} \right),
\end{equation}
and therefore that 
\begin{equation}
    p = \frac{1}{\pi} \rm{tan}^{-1} \left( \frac{\sqrt{k}}{\sqrt{n-k}} \right) = \frac{1}{\pi} \rm{tan}^{-1} \left( \sqrt{\frac{\bar{k}}{1-\bar{k}}} \right),
\end{equation}
where we have defined $\bar{k} \equiv k/n$. 

\end{proof}

\begin{theorem}[Collision Probability]
The probability of a hash-code collision given Hamming distance \(k\) is
\[
\mathcal{P}(\mathrm{coll.} \mid H_d = k) = 1 - 2p(1-p),
\]
where \(p\) is as given in Lemma 2. For \(\bar{k} \to 1\), the asymptotic behavior is
\[
\mathcal{P}(\mathrm{coll.} \mid H_d = k) \approx \frac{1}{2} + \frac{2}{\pi^2} \frac{1-\bar{k}}{\bar{k}}.
\]
\end{theorem}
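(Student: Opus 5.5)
The plan is to derive the collision probability directly from the decomposition already used in the proof of Lemma 2, and then obtain the asymptotic form from a first-order expansion of the arctangent. I work with a single bit of the hash code; for an $M$-bit code one would raise the result to the $M$-th power, since the $M$ angles $\phi$ are sampled independently.

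\textbf{Step 1 (exact form).} Write the two complex hashes as $h = h_c + h_d$ and $h' = h_c + h_d'$. Here $h_c = \boldsymbol{w}_c\cdot\boldsymbol{t}_c$ is shared, and $h_d$, $h_d'$ are the contributions of the $k$ differing positions. By the circular-symmetry argument preceding Lemma 1, these three terms are (treated as) independent random walks of lengths $n-k$, $k$ and $k$. Let $x_c=\mathcal{R}\{h_c\}$ and let $\Delta x,\Delta x'$ be the real parts of the two length-$k$ walks. By the symmetry $x\mapsto -x$ we may assume $x_c>0$.
\begin{itemize}
\item Bit $b$ equals $0$ exactly when $\Delta x < -x_c$, which is the event of probability $p$ from Lemma 2.
\item The same holds for $b'$, with $\Delta x'$ in place of $\Delta x$.
\item The two bits differ exactly when one of these events occurs and the other does not.
\end{itemize}
Conditioning as in Lemma 2, this gives $\mathcal{P}(b\neq b')=2p(1-p)$. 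Taking the complement yields $\mathcal{P}(\mathrm{coll.}\mid H_d=k)=1-2p(1-p)$, with $p$ as in Lemma 2.

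\textbf{Step 2 (asymptotics).} It is convenient to set $p=\tfrac12-\delta$. Then $2p(1-p)=2(\tfrac14-\delta^2)=\tfrac12-2\delta^2$, so the collision probability is exactly $\tfrac12+2\delta^2$, with no approximation yet.

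As $\bar{k}\to1$, the argument $u=\sqrt{\bar{k}/(1-\bar{k})}$ of the arctangent tends to $+\infty$. Use the expansion $\tan^{-1}u=\tfrac{\pi}{2}-\tan^{-1}(1/u)=\tfrac{\pi}{2}-\tfrac1u+O(u^{-3})$. This gives
\[
p=\frac12-\frac1\pi\sqrt{\frac{1-\bar k}{\bar k}}+O\!\left(\left(\frac{1-\bar k}{\bar k}\right)^{3/2}\right),
\]
so $\delta\approx\frac1\pi\sqrt{(1-\bar k)/\bar k}$. Substituting into $\tfrac12+2\delta^2$ gives
\[
\mathcal{P}(\mathrm{coll.}\mid H_d=k)\approx\frac12+\frac{2}{\pi^2}\,\frac{1-\bar k}{\bar k},
\]
with a relative error of order $(1-\bar k)/\bar k$. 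This error is controlled because the $O(u^{-3})$ term in $\delta$ contributes only $O(u^{-4})$ to $\delta^2$.

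\textbf{Main obstacle.} The delicate point is not the algebra but the justification of Step 1. One must argue that the two bits disagree exactly under the ``one walk overshoots, the other does not'' event. One must also argue that conditioning on $x_c$ still lets us multiply $p$ and $1-p$. Strictly, $\Delta x$ and $\Delta x'$ are independent only conditionally on $x_c$, so the honest expression is $2\,\mathbb{E}[q(x_c)(1-q(x_c))]$ with $q(x_c)=\mathcal{P}(\Delta x<-x_c\mid x_c)$. In general this is not the same as $2p(1-p)$ with $p=\mathbb{E}[q]$.

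I would handle this by following the paper's convention in Lemma 2, which treats the per-walk overshoot probability as $p$. I would flag explicitly that $1-2p(1-p)$ is an approximation in the same spirit as the Gaussian approximation of Lemma 1, rather than an exact identity. If exactness were required, I would instead compute the Gaussian orthant probability directly. The pair $(x_c+\Delta x,\,x_c+\Delta x')$ is bivariate normal with correlation $1-\bar k$, so the collision probability would be $1-\tfrac1\pi\cos^{-1}(1-\bar k)$. One could then compare the two expressions in the $\bar k\to1$ regime.
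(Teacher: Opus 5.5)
Your proposal follows the paper's proof: the exact form is the same \emph{exactly one of the two length-$k$ walks overshoots $x_c$} count $2p(1-p)$ set up in the proof of Lemma~2, and the asymptotic form uses the same expansion $\tan^{-1}z=\tfrac{\pi}{2}-\tfrac{1}{z}+\cdots$ at $z=\sqrt{\bar k/(1-\bar k)}$ (your substitution $p=\tfrac12-\delta$, giving exactly $\tfrac12+2\delta^2$, is just a cleaner way to track the leading term). Your caveat is correct and points to a weakness in the paper's derivation, not in your argument: by Jensen $2\langle q(1-q)\rangle<2p(1-p)$, and the exact Gaussian value $1-\tfrac{1}{\pi}\cos^{-1}(1-\bar k)\approx\tfrac12+\tfrac{1-\bar k}{\pi}$ as $\bar k\to1$ has leading coefficient $1/\pi$ rather than $2/\pi^2$ (e.g.\ $2/3$ versus $1-2p(1-p)=5/8$ at $\bar k=\tfrac12$); note, however, that this discrepancy does not vanish as $n\to\infty$, so it is a genuine modelling error rather than an approximation ``in the same spirit'' as Lemma~1.
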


\begin{proof}

The probability decays with a long tail as the scaled Hamming distance increases. 
Specifically, consider the Taylor expansion of $\rm{tan}^{-1}(z)$ as $z \rightarrow \infty$:
\begin{equation}
    \rm{tan}^{-1}(z) = \frac{\pi}{2} - \frac{1}{z} + \text{(Higher Order Terms) }. 
\end{equation}

Therefore, denoting 
\begin{equation}
    z \equiv \sqrt{\frac{\bar{k}}{1-\bar{k}}}, 
\end{equation}
we have that 
\begin{equation}
    p \approx \frac{1}{2} - \frac{1}{\pi}\sqrt{ \frac{1-\bar{k}}{\bar{k}} }.
\end{equation}

Therefore, for values of $\bar{k}$ close to one, and 
keeping only leading order terms, 
\begin{equation}
    \mathcal{P}({\rm coll.} | H_d = k) \approx \frac{1}{2} + \frac{2}{\pi^2}\frac{1-\bar{k}}{\bar{k}}.
\end{equation}

\end{proof}

\begin{figure}[!h]
\centering
\begin{subfigure}{.5\linewidth}
  \centering
  \includegraphics[width=.95\linewidth]{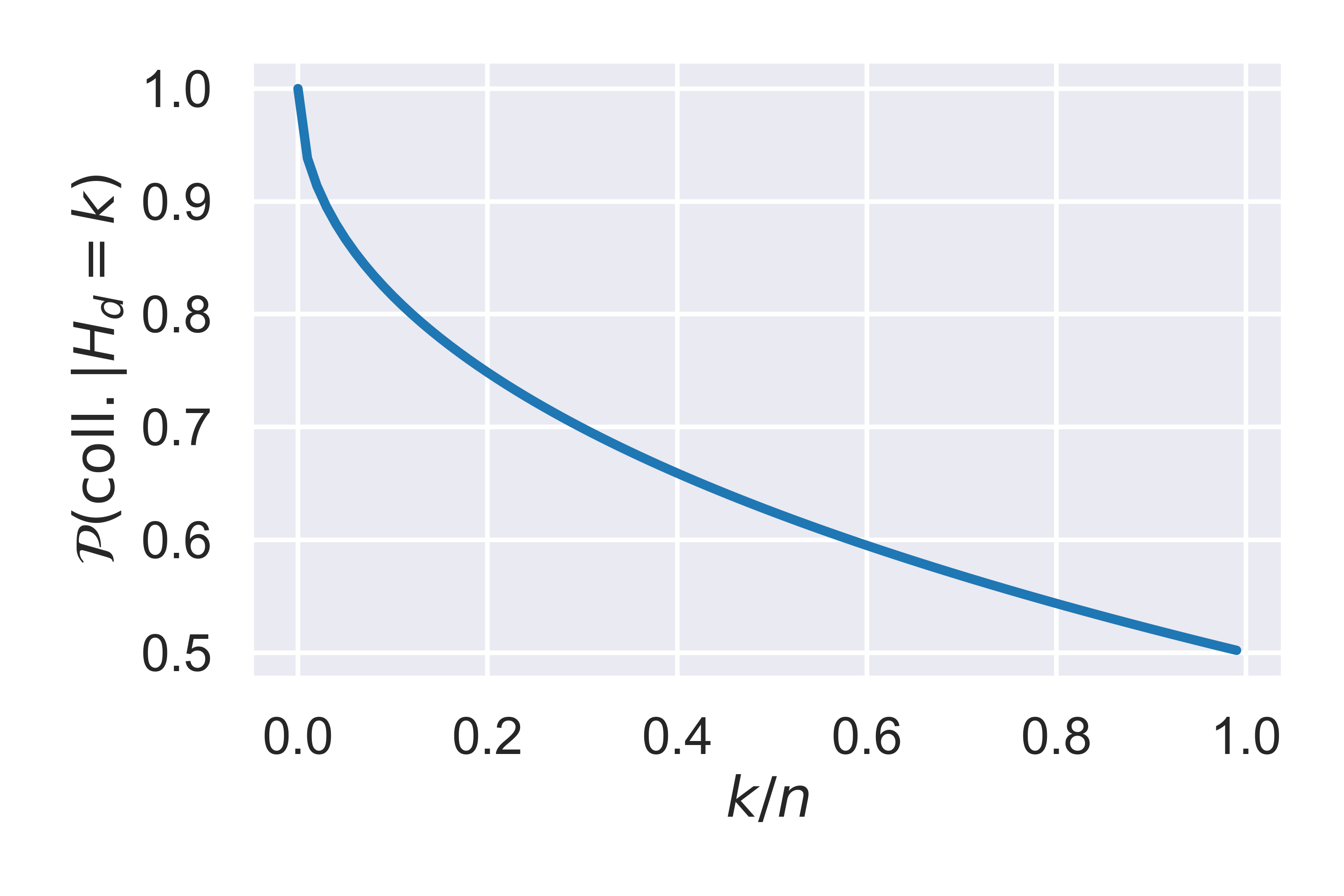}
\end{subfigure}%
\begin{subfigure}{.5\linewidth}
  \centering
  \includegraphics[width=.95\linewidth]{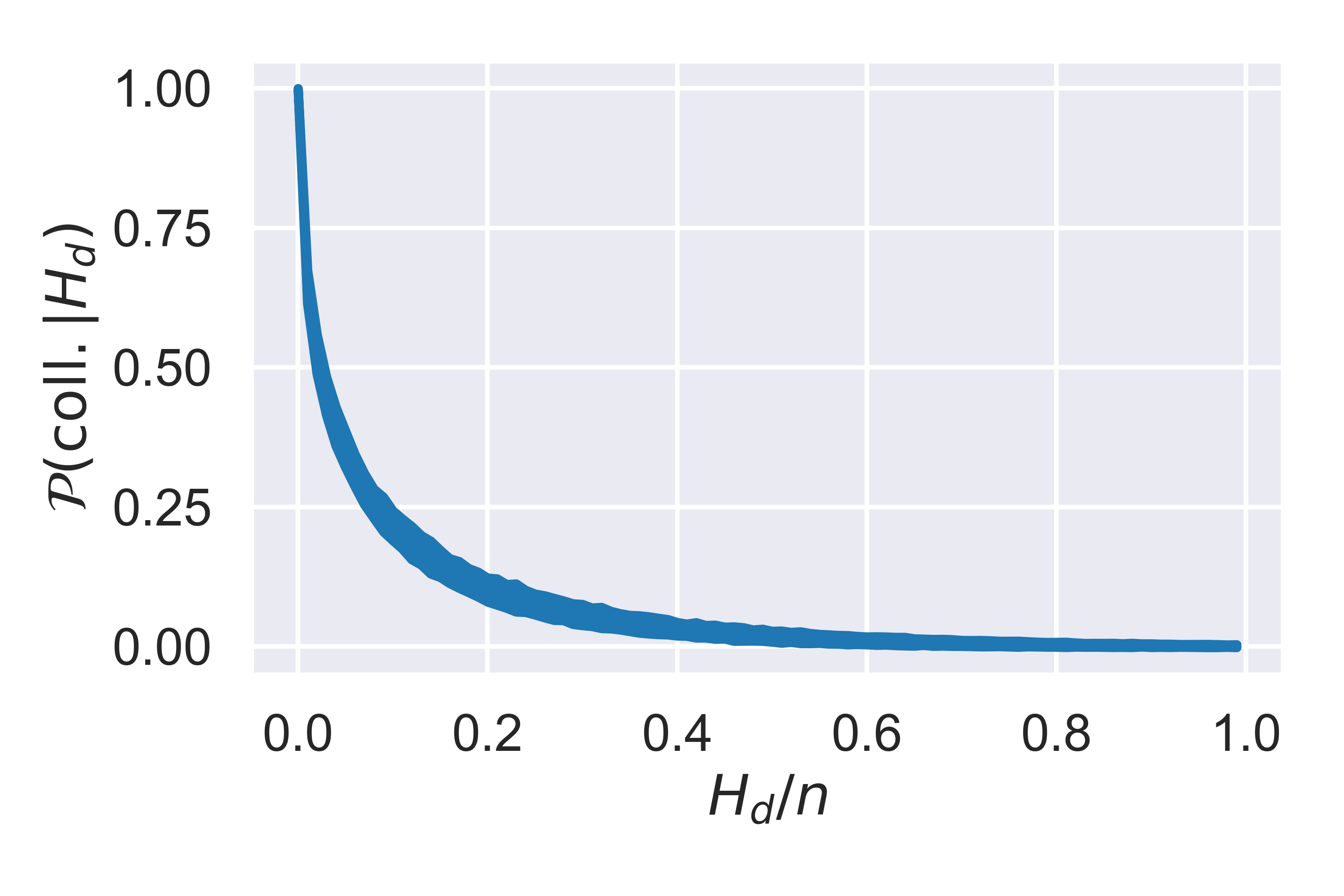}
\end{subfigure}
\caption{(left) Approximate probability of two $n$-grams having a single-bit hash code collision given that they differ by Hamming distance $k$. (right) 95\% credible interval for collision probability between pairs of $(n=100)$-grams as a function of normalized Hamming distance $H_d / n$. 
}
\label{fig:p_coll}
\end{figure}

We see that the derived approximations obey the expected asymptotic limits, with identical sequences being assigned the same bit code with probability one, while maximally different sequences have a purely random 50\% chance of being assigned the same bit (there are two possible bits). 
In practice, we use $M$-bit hash codes, and the probability of an $M$-bit collision will decay much more rapidly. This is observed in the experiment in the following section. 

\subsection{Experiments on random $n$-grams} \label{sec:lsh_random_experiment}

As an experimental test of our complex LSH scheme, we generate 100 independent random experiments. In each experiment, we first create 10 random weight vectors (giving our final bit hash-code space dimensionality $2^{10}$), create
1000 random $(n=100)$-grams by sampling an alphabet of 256 unique tokens, as well as creating Hamming neighbors by swapping exactly $H_d$ tokens with random alternates. The collision probabilities are shown in Fig.
~\ref{fig:p_coll} 
as a function of Hamming distance. The 95\% credible interval over random experiments is shown assuming a Gaussian distribution. 
We see that the probability of collision of all ten-bit hash codes decays rapidly as a function of (normalized) Hamming distance. 

\subsection{Extracting Hamm-Grams} \label{sec:lsh_hamming_extraction}

We have now established the algorithm by which we can run a rolling hash function over a sequence of bytes such that byte sequences with small Hamming distance are more likely to collide into the same hash bucket. Using this new hash, we extend the KiloGrams approach to extract commonly occurring hamm-grams.

\begin{algorithm}[!h]
\caption{Hamm-Gram Extraction}
\label{algo:hamming_gram}
\begin{algorithmic}[1]
\Require rolling hash function $h(\cdot)$,  corpus of  $\mathcal{C}$ documents, and desired number of frequent hamm-grams $k$

\State T $\gets $ new integer array %
\For{ all documents $x \in \mathcal{C}$} \Comment{First pass: find top-$k$ hashes}
    \For{$n$-gram $g \in x$} 
        \State $q' \gets h(g)$ %
        \State $T[q'] \gets T[q'] + 1$ 
    \EndFor
\EndFor
\State $T_k \gets \text{QuickSelect}(T, k)$ \Comment{top-$k$ hashes}

\State $S_H \gets $ $k$ different buckets each corresponding to a top-$k$ hash. 
\For{ all documents $x \in \mathcal{C}$} \Comment{Second pass: save $n$-grams in top-$k$ buckets}
    \For{$n$-gram $g \in x$} 
        \State $q' \gets h(g) $ %
        \If{$q' \in T_k$}
            \State Insert $g$ into the bucket corresponding to $q'$
        \EndIf
    \EndFor
\EndFor

\For{Each bucket of $n$-grams $S \in S_H$} \Comment{Cluster $n$-grams within each bucket}
    \While{$n$-grams remain in $S$ which can be merged}
        \State Apply agglomerative clustering on $n$-grams in $S$
    \EndWhile
\EndFor

\State Build a prefix tree of all saved hamm-grams
\State $F$ $\gets$ new map of final hamm-gram counts

\For{ all documents $x \in \mathcal{C}$} \Comment{Third pass: count matches and rank hamm-grams}
    \For{each character $c_i$ in document $x$}
        \If{a match is found}
            \State Record the match
            \State Update $F$ with the count of matches
        \EndIf
    \EndFor
\EndFor

\State \Return top-$k$ hamm-grams from $F$ ranked by the number of matches
\end{algorithmic}
\end{algorithm}

The KiloGrams algorithm \citep{raff2019kilograms} used a rolling hash function to find the top-$k$ hashes by frequency of n-grams, ignoring that collisions would occur. A second pass over the data would then use the Space Savings data structure to find the true top-$k$ $n$-grams by ignoring any $n$-gram that did not have a top-$k$ hash. Because the vast majority of $n$-grams are singletons, and the true top-$k$ are selected with high probability, it allowed them to find the true top-$k$ $n$-grams with high probability for almost arbitrary values of $n$. 

Selecting the top-$k$ hamm-grams is non-trivial, as finding a minimum set of hamm-grams ($n$-grams with wildcard tokens) is computationally intractable. 
We instead devise a procedure to build possible hamm-grams after the second pass, followed by a third pass which performs matching against the training corpus, sorts regular expressions by number of matches, and keeps only the top-$k$.

Our algorithm is detailed in Algorithm \autoref{algo:hamming_gram}, where the first two loops over all documents in a corpus $\mathcal{C}$ run a modified KiloGram algorithm, 
collecting the most frequent hashes, 
and then saving all $n$-grams that reside in the top-$k$ buckets. 
Agglomerative clustering with a fixed stopping condition is used to create hamm-grams from the n-grams within each bucket. 
This can be implemented in sub-cubic time (in the number of n-grams within the bucket) by an implementation using priority queues~\citep{manning2008introduction}. 
Two hamm-grams are candidates for `merging' if wildcards can be placed in a new expression until it matches both expressions without exceeding the predetermined budget of wildcards. 
Among all candidates, the two hamm-grams of the smallest Hamming distance are merged. 
This process is iterated until there are no mergeable hamm-grams left. 
Additionally, because the Hamming distance is defined as an integer, any pair found to have a Hamming distance of zero can be immediately merged without calculating the remaining distances. This results in runtime in practice significantly better than worst-case scaling.

\section{Hamm-Gram Results on Malware} \label{sec:results}

\subsection{Results on Drebin Android Malware Family Task}

We demonstrate and compare the performance of the top-$k$ hamm-grams and top-$k$ n-grams on the Drebin \citep{arp2014drebin} android malware family classification dataset. 
We have first taken a subset of the twenty malware classes in the dataset which have at least forty samples, leaving a dataset of twenty classes totaling 4661 samples.
In each experiment performed, we kept either the top-($k \mathord{=} 5e2$) or top-($k \mathord{=} 5e3$) grams. 
We begin by examining the learned $n$- and hamm-gram features, measures of information they carry regarding the malware family, and how well these features cluster into groups. Then we compare the performance of Logistic Regression models trained and these features.

\begin{figure}[!h]
\centering
\begin{subfigure}{.5\linewidth}
  \centering
  \includegraphics[width=.95\linewidth]{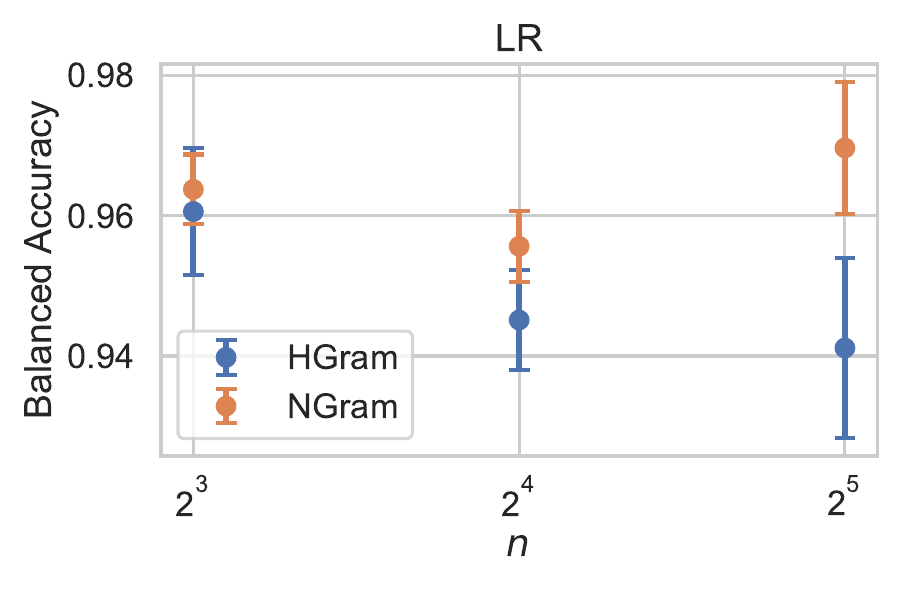}
\end{subfigure}%
\begin{subfigure}{.5\linewidth}
  \centering
  \includegraphics[width=.95\linewidth]{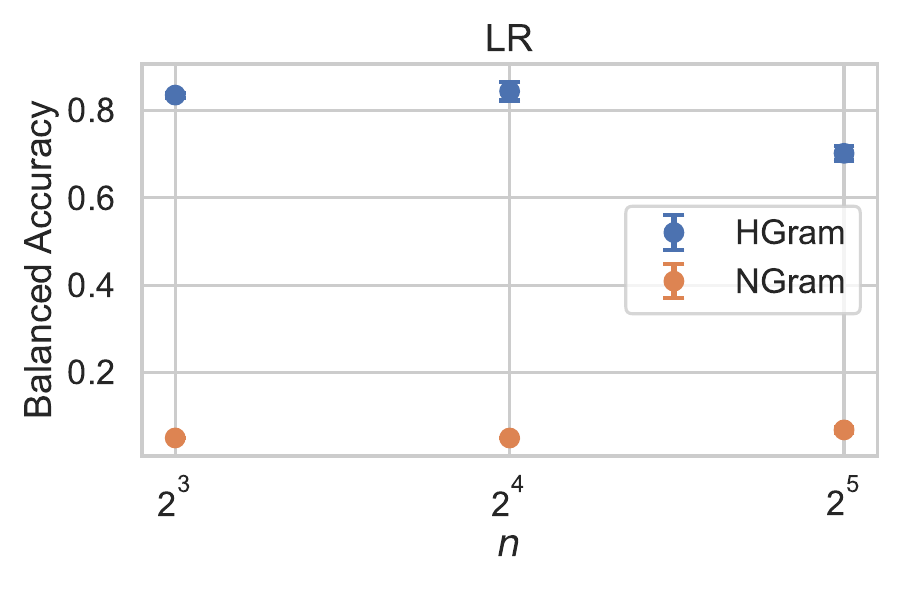}
\end{subfigure}
\caption{Balanced accuracy scores of Logistic Regression classifiers trained on the top-$(k \mathord{=} 5e3)$ (left) or top-$(k \mathord{=} 5e2)$ (right) features, as a function of gram size. The standard deviation over ten random experiments is shown as vertical bars. 
}
\label{fig:drebin_LR}
\end{figure}

Finally, we trained and tested Logistic Regression classifiers on random training/testing splits of the $n$-gram and hamm-gram features for various gram sizes $n$, for ten random experiments each. These are shown in 
Fig.~\ref{fig:drebin_LR}.
In the first case, we keep the top-$(k \mathord{=} 5e3)$ features. In this case, we see both $n$-gram and hamm-gram features achieving balanced accuracies of greater than 95\%. 
These accuracies are considerable given the challenge of a few examples per malware family.
In the second case, we keep only the top-$(k \mathord{=} 5e2)$ features. In this case, we see that for all gram sizes ($n$), hamm-grams significantly outperform ordinary $n$-grams. For $(n \mathord{=} 8)$-grams, the balanced accuracy is barely better than random guessing, while for size 8 hamm-grams the balanced accuracy is already better than 80 \%, significantly better, and large considering the difficulty of learning to classify twenty families of malware with as low as 40 examples of each. 
Thus we see that hamm-gram features can provide significant advantages in cases where model computations are considerably memory- and/or latency-constrained. 

\begin{figure}[!h]
\centering
 \includegraphics[width=0.75\columnwidth]{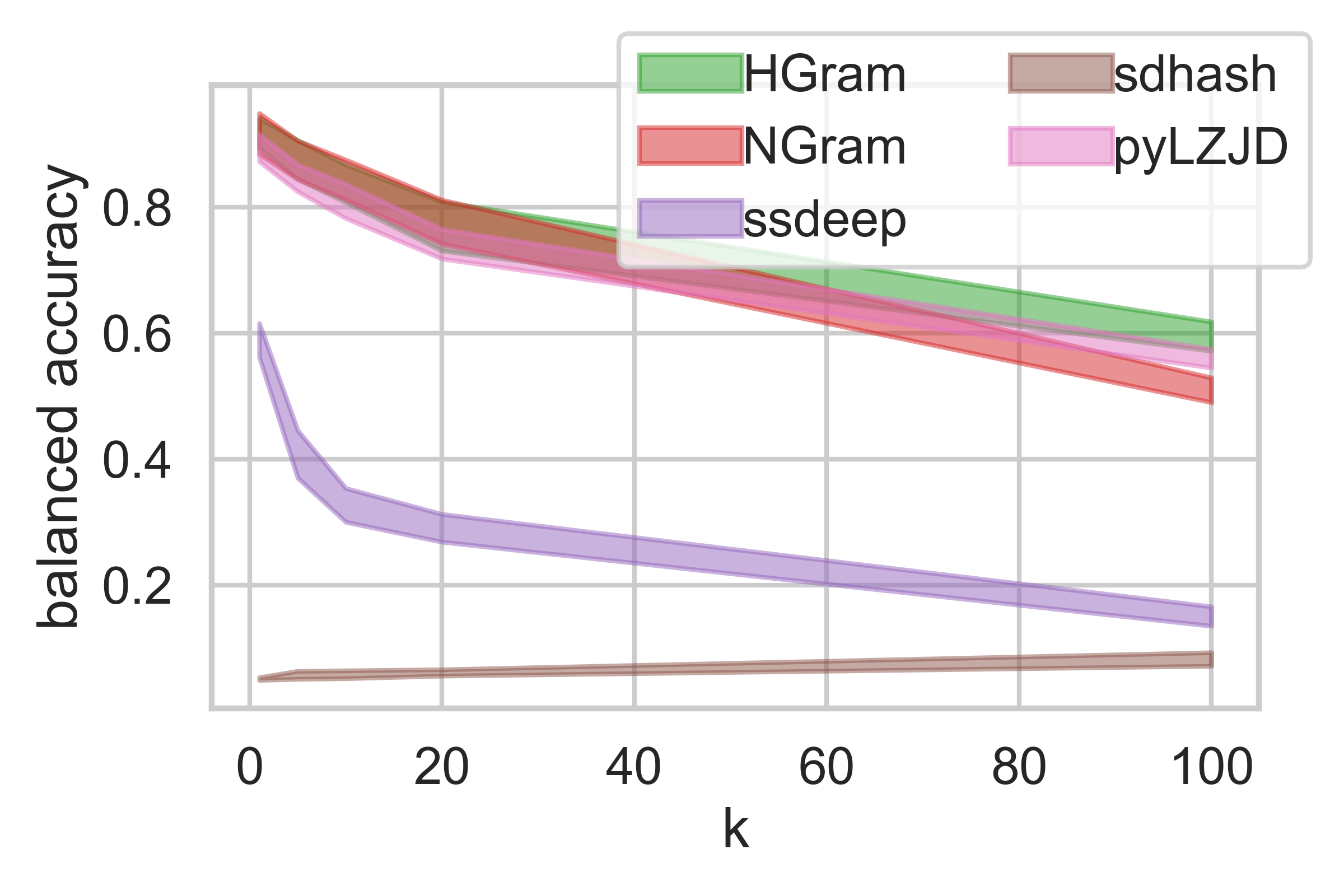}
\caption{Comparison between hamm-gram features and three additional hash baselines, ssdeep, sdhash, and LZJD.
}
\label{fig:additional_knn_baselines}
\end{figure}

Additionally, in Fig.~\ref{fig:additional_knn_baselines} we compare against three similarity hashing methods: ssdeep~\citep{kornblum2006identifying}, sdhash~\citep{roussev2010data},
and LZJD~\citep{raff2018lempel}. 
ssdeep is a fuzzy hash that runs hash functions on fixed-size segments of the data.
sdhash is a fuzzy hash that uses bloom filters and can generate similarity scores. 
LZJD creates Lempel-Ziv dictionaries and measures file similarity via Jaccard distance between their vocabularies.
Each of these methods is intended to work when some small number of bytes have been altered and for malware data. However, they are harder to deploy in many situations because the k-NN strategy requires storing all data and several orders of magnitude more compute than logistic regression. 
We also used the hamm-gram and $n$-gram feature vectors and k-NN classification under cosine distance. 
As can be seen, our hamm-gram features are significantly better than ssdeep and sdhash, two industry-standard tools, for similarity search.
Hamm-grams perform similarly to PyLZJD and $n$-grams for nearest-neighbor search on this dataset, 
but a hamm-gram LR classifier outperforms LZJD. This further shows the value of our approach in building malware family detectors that have more robustness to natural changes that occur as malware spreads and modifies, while being faster and lighter weight to deploy on endpoints in a network. 

\subsection{Examples of Learned Hamm-Gram Features for Drebin}

As empirically demonstrated, the hamm-gram approach significantly improves the accuracy when only a smaller number of features can be considered due to computational constraints. This was further expanded by exploring the top-$k$ features produced by each algorithm. Most notably, the normal $n$-gram approach suffers from extreme feature redundancy that reduces the effective information available to the underlying algorithm. In this section, we specifically examine certain explicit features learned on the Drebin corpus.

A prominent failure case of the standard $n$-gram features on the Drebin dataset are various alterations of the pattern \texttt{SHA1-Digest: ???}, each occupying capacity in the top-k features. On the other hand, the hamm-gram model with size 16 found this RegEx itself as a feature, freeing up space for other semantically different features. 
By coalescing these highly redundant $n$-grams, our hamm-gram approach enables more capacity in the final top-$k$ list for informative features that are both wild-card free (i.e., a normal $n$-gram that would not have been kept otherwise, because common and semantically redundant $n$-grams would have kept it out of the feature selection stage) or that actively use the wild-card capability.

Another example containing wild cards is variants of the byte sequence of \texttt{\textless FONT C????\textgreater.html} where the ``\texttt{?}'' indicates a wild card. 
In this case, it found a RegEx feature identifying a malicious family that uses fonts as a vector for malware. 
In each sample in the corpus, the font has a different name that always begins with a ``C'', and is four or five characters long.
This pattern was identified by our hamm-gram approach. Normal $n$-grams do not find this feature, since the created font names are each unique. 

\subsection{Results on Ember 2018 Malware Classification Task}

\begin{table}
\caption{Standardized Partial AUCs for different features sets and sizes, on EMBER 2018 dataset, for liblinear model with $k$=1e4.}
\label{tab:ember_k_1e4_model_liblinear}
\begin{tabular}{lccc}
\toprule
feature & $n$ & AUC(FPR $< 0.001$) & AUC(FPR $< 0.01$) \\
\midrule
$n$-gram & 8 & 0.580 & 0.737 \\
$n$-gram & 16 & 0.557 & 0.703 \\
$h$-gram & 8 & \textbf{0.605} & \textbf{0.770} \\
$h$-gram & 16 & 0.585 & 0.736 \\
\bottomrule
\end{tabular}
\end{table}

\begin{table}
\caption{Standardized Partial AUCs for different features sets and sizes, on EMBER 2018 dataset, for liblinear model with k=1e5.}
\label{tab:ember_k_1e5_model_liblinear}
\begin{tabular}{lccc}
\toprule
feature & $n$ & AUC(FPR $< 0.001$) & AUC(FPR $< 0.01$) \\
\midrule
$n$-gram & 8 & 0.684 & 0.857 \\
$n$-gram & 16 & 0.600 & 0.785 \\
$h$-gram & 8 & \textbf{0.746} & \textbf{0.893} \\
$h$-gram & 16 & 0.694 & 0.852 \\
\bottomrule
\end{tabular}
\end{table}

In this section, we report a comparison between Logistic Regression models trained and tested on the EMBER 2018 dataset~\citep{anderson2018ember} using either the top-$k$ $n$-gram features or top-$k$ hamm-gram features. The EMBER 2018 dataset consists of Windows Portable Executable (PE) files, $600, 000$ training samples, and $200, 000$ testing samples, split evenly between malicious and benign samples.

In Table~\ref{tab:ember_k_1e4_model_liblinear} are shown the standardized partial AUC scores for logistic regression models using either the top-$k$ $n$-gram features or top-$k$ hamm-gram features, for $k=10^4$. 
We show the standardized partial AUC because high false positive rates are unacceptable in deployed malware detection scenarios. 
These models were trained using LIBLINEAR with a Lasso penalty hyperparameter which was optimized on validation data by Optuna. We see that the model with hamm-gram features of length $8$ (and with a wildcard budget of $2$), is the best-performing model. Additionally, the model with hamm-gram features of length $16$ (and with a wildcard budget of $4$) performs similarly to the best-performing $n$-gram model. 

Similarly, in Table~\ref{tab:ember_k_1e5_model_liblinear} are shown the standardized partial AUC scores for logistic regression models using either the top-$k$ $n$-gram features or top-$k$ hamm-gram features, for $k=10^5$. 
Again, we see that the model with hamm-gram features of size $8$ (with a wildcard budget of $2$), has the strongest performance, and the model with hamm-gram features of length $16$ (and with a wildcard budget of $4$) performs similarly to the best performing $n$-gram model.

\subsection{Results on PDF Classification Task}

\begin{table}
\caption{Partial AUCs for different features sets and sizes, on PDF dataset, for liblinear model with k=1e4.}
\label{tab:pdf_k_1e4_model_liblinear}
\begin{tabular}{lccc}
\toprule
feature & $n$ & AUC(FPR $< 0.001$) & AUC(FPR $< 0.01$) \\
\midrule
$n$-gram & 8 & 0.957 & 0.991 \\
$n$-gram & 16 & 0.944 & 0.985 \\
$h$-gram & 8 & \textbf{0.985} & \textbf{0.994} \\
$h$-gram & 16 & 0.978 & 0.991 \\
\bottomrule
\end{tabular}
\end{table}

In this section, we report comparisons on a dataset of 100K train and 10k test malicious and benign PDFs scraped from VirusTotal. 
The comparison in Table~\ref{tab:pdf_k_1e4_model_liblinear} yields similar conclusions, with hamm-gram features of size 8 performing the best on the task, better than $n$-gram features of size 8 or 16. Additionally, the hamm-grams of size 16 perform as well as the best $n$-gram model ($n$=8). 

\section{Conclusion} \label{sec:conclsion}

In this manuscript, we have proposed a novel algorithm for mining \emph{hamm-grams}, common regular expressions of bytes, utilizing a newly developed Locality Sensitive Hash. 
We have demonstrated that hamm-grams provide significant increases in model performance 
on an Android malware classification task 
and on large malware detection tasks for Windows Portable Executables and PDFs. 
Finally, we have demonstrated examples of mined hamm-gram features where allowing wildcards was essential to discovering a meaningful underlying concept in the data.

\bibliographystyle{ACM-Reference-Format}
\bibliography{biblio}

\end{document}